\newtheorem{theorem}{Theorem}[section]
\newtheorem{proposition}[theorem]{Proposition}
\newtheorem{lemma}[theorem]{Lemma}
\newtheorem{corollary}[theorem]{Corollary}
\theoremstyle{definition}
\newtheorem{definition}[theorem]{Definition}
\newtheorem{example}[theorem]{Example}
\newcommand{\dom}{d}
\newcommand{\ad}{\mathit{ad}}
\newcommand{\Sup}{\bigvee}
\newcommand{\Inf}{\bigwedge}
\begin{document}

\title{Domain Semirings United}

\author[1]{Uli Fahrenberg}
\author[2]{Christian Johansen} 
\author[3]{Georg Struth} 
\author[4]{Krzysztof Ziemia\'nski}
\affil[1]{{\'E}cole Polytechnique, Palaiseau, France}
\affil[2]{ University of Oslo, Norway} 
\affil[3]{University of Sheffield, UK }
\affil[4]{University of Warsaw, Poland}

\date{}











\maketitle

\begin{abstract}
  Domain operations on semirings have been axiomatised in two
  different ways: by a map from an additively idempotent semiring into
  a boolean subalgebra of the semiring bounded by the additive and
  multiplicative unit of the semiring, or by an endofunction on a
  semiring that induces a distributive lattice bounded by the two
  units as its image.  This note presents classes of semirings where
  these approaches coincide.

  \textbf{Keywords:} semirings, quantales, domain operations
\end{abstract}

\pagestyle{plain}


\section{Introduction}\label{S:introduction}

Domain semirings and Kleene algebras with
domain~\cite{DesharnaisMS06,DesharnaisS11} yield particularly simple
program verification formalisms in the style of dynamic logics,
algebras of predicate transformers or boolean algebras with operators
(which are all related).

There are two kinds of axiomatisation. Both are inspired by properties
of the domain operation on binary relations, but target other
computationally interesting models such as program traces or paths on
digraphs as well.

The initial two-sorted axiomatisation~\cite{DesharnaisMS06} models the
domain operation as a map $\dom:S\to B$ from an additively idempotent
semiring $(S,+,\cdot,0,1)$ into a boolean subalgebra $B$ of $S$
bounded by $0$ and $1$. This seems natural as domain elements form
powerset algebras in the target models mentioned. Yet the domain
algebra $B$ cannot be chosen freely: $B$ must be the maximal boolean
subalgebra of $S$ bounded by $0$ and $1$ and equal to the set $S_\dom$
of fixpoints of $\dom$ in $S$.

The alternative, one-sorted axiomatisation~\cite{DesharnaisS11}
therefore models $\dom$ as an endofunction on a semiring $S$ that
induces a suitable domain algebra on $S_\dom$---yet generally only a
bounded distributive lattice. An antidomain (or domain
complementation) operation is needed to obtain boolean domain
algebras.

In the model of binary relations over a set $X$, $+$ is set union and
$\cdot$ relational composition; $0$ is the empty relation and $1$ the
identity relation. The domain of relation $R\subseteq X\times X$ is
$\dom(R)=\{(x,x)\mid \exists y.\ (x,y)\in R\}$ while its antidomain is
$\ad(R)= \{(x,x)\mid \forall y.\ (x,y) \notin R\}$.  In the path model
over a directed graph $\sigma,\tau:E\to V$, the carrier set consists
of all finite paths $(v_1,e_1,v_2,\dots, v_{n-1},e_{n-1},v_n)$ in the
graph in which vertices $v_i\in V$ and edges $e_i\in E$ alternate and
are compatible with the source map $\sigma$ and target map $\tau$.
The operations $+$ and $0$ are again $\cup$ and $\emptyset$,
respectively; $1$ is $V$ with elements $v\in V$ seen as paths of
length $1$. Extending $\sigma$ and $\tau$ to paths as expected,
composition $\pi_1; \pi_2$ of paths $\pi_1$ and $\pi_2$ is defined if
$\tau(\pi_1)=\sigma(\pi_2)$, and it then glues on this vertex. Path
composition is lifted to sets of paths as
$P;Q=\{\pi_1;\pi_2\mid \pi_1\in P,\pi_2\in Q,
\tau(\pi_1)=\sigma(\pi_2)\}$. Finally
$\dom(P)=\{\sigma(\pi)\mid \pi\in P\}$ and
$\ad(P)=\{v\mid \forall \pi.\ \sigma(\pi)=v\Rightarrow \pi\not\in
P\}$.  Other models can be found in the literature.

This note revisits the two axiomatisations mentioned above to tie some
loose ends together. We describe a natural algebraic setting in which
they coincide, and which has so far been overlooked. It consists of
additively idempotent semirings in which the sets of all elements
below $1$ form boolean algebras, as is the case, for instance, in
boolean monoids and boolean quantales.  We further take the
opportunity to discuss domain axioms for arbitrary quantales.

The restriction to such boolean settings has little impact on
applications: most models of interest are powerset algebras and hence
(complete atomic) boolean algebras anyway.  Yet the coincidence itself
does make a difference: one-sorted domain semirings are easier to
formalise in interactive proof assistants and apply in program
verification and correctness.


\section{Domain Axioms for Semirings}\label{S:domain-ops}

First we recall the two axiomatisations of domain semirings and their
relevant properties. To distinguish them, we call the first class,
introduced in~\cite{DesharnaisMS06}, \emph{test dioids with domain} and
the second one, introduced in~\cite{DesharnaisS11}, \emph{domain
  semirings}.

We assume familiarity with posets, lattices and semirings.  A
\emph{dioid}, in particular, is an idempotent semiring
$(S,+,\cdot,0,1)$, that is, $x+x=x$ holds for all $x\in S$. Its
additive monoid $(S,+,0)$ is then a semilattice ordered by
$x\le y\Leftrightarrow x+y=y$ and with least element $0$;
multiplication preserves $\le$ in both arguments.  (We generally omit
the $\cdot$ for multiplication.)

We write $S_1 = \{x\in S\mid x\le 1\}$ for the set of
\emph{subidentities} in $S$ and call $S$ \emph{bounded} if it has a
maximal element, $\top$.

We call a dioid $S$  \emph{full} if $S_1$ is a boolean algebra,
bounded by $0$ and $1$, with $+$ as sup, $\cdot$ as inf and an
operation $(\_)'$ of complementation that is defined only on $S_1$.

\begin{definition}[\cite{DesharnaisMS06}]
A \emph{test dioid} $(S,B)$ is a dioid $S$ that
contains a boolean subalgebra $B$ of $S_1$---the \emph{test algebra}
of $S$---with least element $0$, greatest element $1$, in which $+$
coincides with sup and that is closed under multiplication.
\end{definition}

Once again we write $(\_)'$ for complementation on $B$.

\begin{lemma}[\cite{DesharnaisMS06}]\label{P:test-meet}
  In every test dioid, multiplication of tests is their meet.
\end{lemma}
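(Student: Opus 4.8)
The plan is to show, for any two tests $p,q\in B$, that the product $pq$ is precisely the meet $p\wedge q$ taken in the boolean algebra $B$ (equivalently, with respect to $\le$). First I would check that $pq$ is a \emph{lower bound} of $\{p,q\}$: since $p\le 1$ and $q\le 1$ and multiplication preserves $\le$ in both arguments, $pq\le 1\cdot q=q$ and $pq\le p\cdot 1=p$. Because $B$ is closed under multiplication, $pq\in B$, so $pq\le p\wedge q$ by definition of meet in $B$.

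For the converse inequality $p\wedge q\le pq$ I would first record the auxiliary fact $qq'=0$. This follows from the lower-bound argument just made, applied to the pair $q,q'\in B$: $qq'\le q$ and $qq'\le q'$, and since $qq'\in B$ it lies below the boolean meet $q\wedge q'=0$. Now set $m=p\wedge q$. Since $+$ coincides with the sup of $B$ and $q\vee q'=1$ in $B$, we have $1=q+q'$, so distributivity of the dioid gives $m=m\cdot 1=m(q+q')=mq+mq'$. From $m\le q$ and monotonicity of multiplication, $mq'\le qq'=0$, hence $mq'=0$ and therefore $m=mq$. Finally $m\le p$ yields $m=mq\le pq$.

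Combining $pq\le p\wedge q$ with $p\wedge q\le pq$ gives $pq=p\wedge q$, as claimed.

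I expect the main obstacle to be the converse inequality: it is not a purely order-theoretic consequence of $pq$ being a lower bound, but genuinely needs the complement $q'$ together with the distributive law in order to decompose $m$ along $q$ and $q'$. The neatest way through this is the observation $qq'=0$, which conveniently reuses the easy lower-bound half of the argument; everything else is then routine monotonicity and distributivity.
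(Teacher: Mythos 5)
Your proof is correct; the paper itself only cites \cite{DesharnaisMS06} for this lemma, and your argument is essentially the standard one from that source: monotonicity below $1$ gives $pq\le p\wedge q$, and the complement $q'$ together with $q+q'=1$, distributivity, and $qq'=0$ gives the converse. The only point worth noting is that your decomposition $m=mq+mq'$ never uses $m\in B$, so the same computation in fact shows $pq$ is the greatest lower bound of $p$ and $q$ in all of $S$, which justifies your parenthetical ``equivalently, with respect to $\le$''.
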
 

\begin{lemma}[\cite{DesharnaisMS06}]\label{P:galois-aux}
  Let $(S,B)$ be a test dioid. Then, for all $x\in S$ and $p\in B$,
\begin{enumerate}
\item $x\le px \Leftrightarrow p'x= 0$, 
\item $x\le px \Leftrightarrow x\le p\top$ if $S$ is bounded.
\end{enumerate}
  \end{lemma}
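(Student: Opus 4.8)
The plan is to use only elementary dioid arithmetic together with the boolean algebra structure of $B$. The two facts I need about a test $p\in B$ and its complement $p'$ are that $p+p'=1$ and $pp'=0$: the former because $+$ coincides with join in $B$ and $p\vee p'=1$, the latter because, by Lemma~\ref{P:test-meet}, multiplication of tests is their meet, so $pp'=p\wedge p'=0$ (and likewise $p'p=0$). I will also use repeatedly that multiplication distributes over $+$, that $0$ is a left annihilator, that $1$ is a left unit, and that multiplication is order-preserving in each argument. Since domain concerns \emph{left} multiplication by tests, all computations below multiply $x$ on the left.

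For part~1, suppose first that $x\le px$. Multiplying on the left by $p'$ and using order-preservation gives $p'x\le p'(px)=(p'p)x=0\cdot x=0$, hence $p'x=0$. Conversely, suppose $p'x=0$. Then $x=1x=(p+p')x=px+p'x=px+0=px$, so in particular $x\le px$; in fact equality holds, but only the inequality is claimed.

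For part~2, assume $S$ is bounded with greatest element $\top$. If $x\le px$, then since $x\le\top$ and left multiplication by $p$ is order-preserving we get $px\le p\top$, whence $x\le px\le p\top$. Conversely, if $x\le p\top$, then $p'x\le p'(p\top)=(p'p)\top=0\cdot\top=0$, so $p'x=0$, and part~1 yields $x\le px$.

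I do not anticipate a genuine obstacle: the statement is a direct consequence of distributivity and the complementation laws, once Lemma~\ref{P:test-meet} supplies $pp'=0$. The only point needing a little care is keeping left and right multiplication apart---using $p+p'=1$ as a \emph{left} factor of $x$ in part~1 and writing $p\top$ rather than $\top p$ throughout---since the ambient dioid is not assumed commutative.
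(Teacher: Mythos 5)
Your proof is correct, and since the paper only cites this lemma from~\cite{DesharnaisMS06} without reproducing a proof, there is nothing to diverge from: your argument is the standard one, resting exactly on the two facts $p+p'=1$ and $p'p=0$ supplied by the test-algebra structure and Lemma~\ref{P:test-meet}, together with distributivity, annihilation by $0$, and order-preservation of multiplication. The care you take to keep all test factors on the left is exactly the right instinct in a noncommutative dioid.
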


\begin{definition}[\cite{DesharnaisMS06}]\label{D:domain1}
  A \emph{test dioid with predomain} is a test dioid $(S,B)$ with a
  \emph{predomain operation} $\dom:S\to B$ such that, for all $x\in S$
  and $p\in B$,
\begin{equation*}
  x \le \dom(x)x\qquad \text{ and } \qquad \dom(px)\le p.
\end{equation*}
It is a \emph{test dioid with domain} if it also satisfies, for
$x,y\in S$, the \emph{locality} axiom
  \begin{equation*}
    \dom(x\dom(y)) \le \dom(xy).
  \end{equation*}
\end{definition}

\emph{Weak locality} $\dom(xy)\le \dom(x\dom(y))$ already holds
in every test dioid with predomain. Thus
$\dom(x\dom(y)) = \dom(xy)$ in every test dioid with domain.

It is easy to check that binary relations and sets of paths satisfy
the axioms of test dioids with domain, and that $B=S_1$ in both
models.

\begin{lemma}[\cite{DesharnaisMS06}]\label{P:old-d-props} 
  In every test dioid $(S,B)$, the following statements are
  equivalent:
\begin{enumerate}
\item $(S,B,\dom)$ is a test dioid with predomain,
\item the map $\dom:S\to B$ on $(S,B)$ satisfies, for all $x\in S$ and
  $p\in B$, the \emph{least left absorption} property
\begin{equation}
  \label{eq:lla}
  \dom(x) \le p \Leftrightarrow x\le px,\tag{lla}
\end{equation}
\item in case $S$ is bounded, $\dom:S\to B$ on $(S,B)$ is, for all
  $x\in S$ and $p\in B$, the left adjoint in the adjunction
\begin{equation}
  \label{eq:d-adj}
  \dom(x) \le p \Leftrightarrow x\le p\top.\tag{d-adj}
\end{equation}
\end{enumerate}
\end{lemma}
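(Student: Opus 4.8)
The plan is to prove $(1)\Leftrightarrow(2)$ directly and then, under the boundedness assumption, to obtain $(2)\Leftrightarrow(3)$ immediately from Lemma~\ref{P:galois-aux}(2). Throughout, $\dom\colon S\to B$ is a fixed map, and I rely on two elementary facts about tests: $pp=p$ for every $p\in B$ (multiplication of tests is their meet by Lemma~\ref{P:test-meet}, and meet is idempotent), and $px\le x$ for all $p\in B$ and $x\in S$ (since $p\le 1$ and multiplication is monotone, so $px\le 1x=x$).

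For $(2)\Rightarrow(1)$ I would instantiate~\eqref{eq:lla}. Putting $p:=\dom(x)$ makes the left-hand side of~\eqref{eq:lla} hold by reflexivity, so its right-hand side gives $x\le\dom(x)x$. Replacing $x$ by $px$ and keeping $p$,~\eqref{eq:lla} becomes $\dom(px)\le p\Leftrightarrow px\le p(px)$; here the right-hand side holds because $p(px)=(pp)x=px$, hence $\dom(px)\le p$. Thus both predomain axioms follow.

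For $(1)\Rightarrow(2)$, assume $x\le\dom(x)x$ and $\dom(px)\le p$. If $\dom(x)\le p$ then $\dom(x)x\le px$ by monotonicity of multiplication, so $x\le\dom(x)x\le px$. For the converse implication, suppose $x\le px$; combining this with $px\le x$ yields $x=px$, and therefore $\dom(x)=\dom(px)\le p$. This establishes~\eqref{eq:lla}.

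Finally, if $S$ is bounded, Lemma~\ref{P:galois-aux}(2) gives $x\le px\Leftrightarrow x\le p\top$, so~\eqref{eq:lla} and~\eqref{eq:d-adj} are literally the same biconditional and $(2)\Leftrightarrow(3)$ follows. The argument is essentially routine; the one step I would single out as the main point is the backward direction of $(1)\Rightarrow(2)$, where the key observation is that $x\le px$ actually forces the equality $x=px$ (because $px\le x$ always holds for a subidentity $p$), which is precisely what lets the axiom $\dom(px)\le p$ be brought to bear.
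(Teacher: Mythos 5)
Your proof is correct; the paper itself gives no argument for this lemma (it is cited from the literature), and your derivation is the standard one. The one point worth highlighting is that you correctly avoid the tempting but unjustified step of invoking monotonicity of $\dom$ in the backward direction of $(1)\Rightarrow(2)$ --- monotonicity is not among the predomain axioms --- by instead observing that $x\le px$ together with $px\le x$ forces $x=px$, after which $\dom(px)\le p$ applies directly; the reduction of $(2)\Leftrightarrow(3)$ to Lemma~\ref{P:galois-aux}(2) is exactly as intended.
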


Interestingly, test algebras of test dioids with domain cannot be
chosen ad libitum: they are formed by those subidentities that are
complemented relative to the multiplicative
unit~\cite{DesharnaisMS06}. This has the following consequences.

\begin{proposition}\label{P:old-d-semiring-bool}
  The test algebra $B$ of a test dioid with domain $(S,B,\dom)$ is the
  largest boolean subalgebra of $S_1$.
\end{proposition}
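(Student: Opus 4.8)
The plan is to reduce the statement to the description of the test algebra recalled just above: in a test dioid with domain, $B$ consists \emph{exactly} of the subidentities that are complemented relative to $1$, i.e.\ $B=\{p\in S_1\mid \exists q.\ p+q=1 \text{ and } pq=qp=0\}$, as established in~\cite{DesharnaisMS06}. Granting this, the proposition follows quickly, and the only genuine work is to check that the notion of ``boolean subalgebra of $S_1$'' in the statement lines up with the ambient operations $+$, $\cdot$, $0$, $1$.

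Concretely, I would first take an arbitrary boolean subalgebra $C\subseteq S_1$ with least element $0$, greatest element $1$, join $+$ and meet $\cdot$, pick $p\in C$, and let $q\in C$ be the complement of $p$ in $C$. Since joins in $C$ are computed by $+$ and $1$ is its top, $p+q=1$; since meets in $C$ are computed by $\cdot$, are commutative, and have bottom $0$, we get $pq=qp=0$. Thus $p$ is complemented relative to $1$ in $S$, hence $p\in B$ by the cited description, and therefore $C\subseteq B$. It then remains to observe that $B$ is itself a subalgebra of exactly this kind: by definition of a test dioid it is a boolean subalgebra of $S_1$ bounded by $0$ and $1$ with $+$ as join, and Lemma~\ref{P:test-meet} identifies $\cdot$ on $B$ with meet. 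Hence $B$ is the largest such subalgebra, as claimed.

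If one prefers a self-contained argument, the only nontrivial ingredient to reprove is that a subidentity $p$ with $p+q=1$ and $pq=qp=0$ satisfies $\dom(p)=p$, so that $p$, being a fixpoint of $\dom$, lies in $B$. The inequality $p\le\dom(p)$ is immediate from $p\le\dom(p)p\le\dom(p)$ (using $\dom(p)\le 1$). For $\dom(p)\le p$, I would use locality (together with weak locality, which already holds in test dioids with predomain, so that $\dom(x\dom(y))=\dom(xy)$) to eliminate a cross term: $\dom(q\dom(p))=\dom(qp)=\dom(0)=0$, whence $q\dom(p)\le\dom(q\dom(p))\,q\dom(p)=0$, and therefore $\dom(p)=(p+q)\dom(p)=p\dom(p)+q\dom(p)=p\dom(p)\le p$. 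This locality step is the crux: predomain alone does not pin down $B$, so I expect this to be the main obstacle in a from-scratch proof, while the remainder is routine order-theoretic bookkeeping.
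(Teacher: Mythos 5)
Your proof is correct and follows essentially the same route as the paper, which derives this proposition directly from the cited fact that $B$ consists exactly of the subidentities complemented relative to $1$ --- precisely your first reduction, with the routine observation that any boolean subalgebra of $S_1$ (with $+$ as join and $\cdot$ as meet) consists of such complemented elements. Your optional self-contained verification that a complemented subidentity is a fixpoint of $\dom$ (via locality to get $q\dom(p)=0$ and then $\dom(p)=(p+q)\dom(p)=p\dom(p)\le p$) is also sound and correctly pinpoints locality as the ingredient that predomain alone would not supply.
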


We write  $S_\dom=\{x\mid \dom(x) =x\}$
and $\dom(S)$ for the image of $S$ under $\dom$.

\begin{lemma}[\cite{DesharnaisS11}]\label{P:old-d-fix}
  Let $(S,B,\dom)$ be a test dioid with domain. Then $B=S_{\dom}=\dom(S)$. 
\end{lemma}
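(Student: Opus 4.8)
The plan is to reduce the whole statement to one observation: the predomain operation restricts to the identity on the test algebra $B$.

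First I would show that $\dom(p)=p$ for every $p\in B$. The inequality $\dom(p)\le p$ is immediate from the second predomain axiom $\dom(px)\le p$ with $x=1$, since $p1=p$. For the reverse inequality, the first predomain axiom $x\le \dom(x)x$ instantiated at $x=p$ gives $p\le \dom(p)p$; and because $p\le 1$ and multiplication is monotone, $\dom(p)p\le \dom(p)1=\dom(p)$, so $p\le \dom(p)$. Antisymmetry of $\le$ yields $\dom(p)=p$. (Note that locality plays no role here: predomain already suffices, which is consistent with the lemma being a statement about the underlying test-dioid-with-predomain structure.)

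From this, $B\subseteq \dom(S)$ because each $p\in B$ equals $\dom(p)$ and hence lies in the image of $\dom$; and $B\subseteq S_\dom$ because each such $p$ is a fixpoint of $\dom$. For the converse inclusions, $\dom(S)\subseteq B$ holds by the very typing $\dom\colon S\to B$, and $S_\dom\subseteq B$ because any $x$ with $\dom(x)=x$ satisfies $x=\dom(x)\in \dom(S)\subseteq B$. Chaining these, $B=\dom(S)=S_\dom$, which is the claim.

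There is no serious obstacle; the only point that needs care is that one genuinely uses $B\subseteq S_1$, i.e.\ $p\le 1$, to pass from $\dom(p)p$ back to $\dom(p)$ — without subidentities the argument breaks, which is precisely why the statement is phrased in the setting of test dioids. Alternatively one could derive $B=S_\dom=\dom(S)$ from Proposition~\ref{P:old-d-semiring-bool} together with the characterisation of the test algebra as the subidentities complemented relative to $1$, but the direct computation above is shorter and self-contained.
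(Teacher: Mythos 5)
Your argument is correct: the two predomain axioms give $\dom(p)\le p$ (taking $x=1$) and $p\le\dom(p)p\le\dom(p)$ (using $p\le 1$ and monotonicity of multiplication), so $\dom$ fixes $B$ pointwise, and the inclusions $\dom(S)\subseteq B$ and $S_\dom\subseteq\dom(S)$ then close the circle. The paper only cites this lemma without proof, but your computation is the standard one from the referenced work, and you are right that locality is not needed.
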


Next we turn to the second type of axiomatisation. 

\begin{definition}[\cite{DesharnaisS11}]
  A \emph{domain semiring} is a semiring $S$ with a map $\dom:S\to S$
  such that, for all $x,y\in S$ and with $\le $ defined as for
  dioids,
\begin{align}
 x &\le \dom(x)x,\tag{d1}\label{eq:d1}\\
\dom(x\dom(y)) &= \dom(xy),\tag{d2}\label{eq:d2}\\
 \dom(x) &\le 1,\tag{d3}\label{eq:d3}\\
\dom(0) &= 0,\tag{d4}\label{eq:d4}\\
\dom(x + y) &= \dom(x) + \dom(y).\tag{d5}\label{eq:d5}
\end{align}
\end{definition} 

Every domain semiring is a dioid:
$\dom(1) = \dom(1)1 = 1 + \dom(1)1 = 1 + \dom(1) = 1$, where the
second identity follows from (\ref{eq:d1}) and the last one from
(\ref{eq:d3}), therefore $1+ 1 = 1 + \dom(1)=1$ and finally
$x+x= x(1+1)=x$.  It follows that $\le$ is a partial order and that
axiom (\ref{eq:d1}) can be strengthened to $\dom(x) x = x$.

Once again it is straightforward to check that binary relations and
sets of paths form domain semirings.

In a domain semiring $S$, $\dom$ induces the domain algebra:
$\dom\circ \dom = \dom$ and therefore $S_\dom= \dom(S)$.  Moreover,
$(S_\dom,+,\cdot,0,1)$ forms a subsemiring of $S$, which is a bounded
distributive lattice with $+$ as binary sup, $\cdot$ as binary inf,
least element $0$ and greatest element $1$~\cite{DesharnaisS11}, but
not necessarily a boolean algebra.
\begin{example}[\cite{DesharnaisS11}]\label{ex:lattice1}
  The distributive lattice $0<a<1$ is a dioid with meet as
  multiplication, and a domain semiring with $\dom=\mathit{id}$ and
  therefore $S_\dom = S$.\qed
\end{example}

\begin{proposition}[\cite{DesharnaisS11}]\label{P:max-ds}
  The domain algebra of a domain semiring $S$ contains the largest boolean
  subalgebra of $S$ bounded by $0$ and $1$.
\end{proposition}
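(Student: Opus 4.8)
The plan is to establish the slightly stronger statement that \emph{every} boolean subalgebra $A$ of $S$ bounded by $0$ and $1$ is already contained in the domain algebra $S_\dom$; the proposition then follows immediately. So I would fix such an $A$, pick $p\in A$ with complement $q\in A$ (so $p+q=1$ and $pq=qp=0$, the multiplication of $S$ restricting to the meet of $A$ exactly as for test algebras), and note that $p,q\le 1$ because $1$ is the top of $A$. The whole proof amounts to showing $\dom(p)=p$.

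First I would decompose $\dom(p)$. Axiom (\ref{eq:d1}) in its sharpened form $\dom(x)x=x$ gives $\dom(p)p=p$, so distributing over $1=p+q$ yields $\dom(p)=\dom(p)p+\dom(p)q=p+\dom(p)q$. Thus it remains only to prove that the ``defect term'' $\dom(p)q$ is $0$. The basic tool for the rest is the remark that $z\le\dom(z)$ for every $z\le 1$ (immediate from (\ref{eq:d1}) and (\ref{eq:d3})), so that $z=0$ as soon as $\dom(z)=0$.

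Next I would compute a few domains. From locality (\ref{eq:d2}) and (\ref{eq:d4}), $\dom(p\dom(q))=\dom(pq)=\dom(0)=0$; since $p\dom(q)\le 1$ this gives $p\dom(q)=0$, and symmetrically $q\dom(p)=0$. Applying (\ref{eq:d2}) once more, $\dom(\dom(p)q)=\dom(\dom(p)\dom(q))$, and since $\dom(p)\dom(q)\in S_\dom$ (the domain algebra is a subsemiring, hence closed under $\cdot$) this equals $\dom(p)\dom(q)$. Writing $m=\dom(p)\dom(q)$ and using that $\cdot$ is the commutative meet on $S_\dom$ (so $m=\dom(q)\dom(p)$ as well), I get $pm=(p\dom(q))\dom(p)=0$ and $qm=(q\dom(p))\dom(q)=0$, whence $m=1\cdot m=(p+q)m=pm+qm=0$. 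Therefore $\dom(\dom(p)q)=m=0$, and since $\dom(p)q\le 1$ the remark above forces $\dom(p)q=0$, so $\dom(p)=p$. As $p\in A$ was arbitrary, $A\subseteq S_\dom$.

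The one step that requires an idea is the introduction of the auxiliary element $m=\dom(p)\dom(q)$ and the observation that it is annihilated on the left by $p$, by $q$, hence by $p+q=1$; one must also take care to apply locality (\ref{eq:d2}) in the direction that moves an outer factor inside a $\dom$ and then collapses $\dom$ on the resulting element of $S_\dom$. Everything else — the decomposition of $\dom(p)$ and the closing arithmetic — is routine semiring and lattice computation.
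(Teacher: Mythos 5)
Your argument is correct, and it proves the right strengthening: every boolean subalgebra of $S$ bounded by $0$ and $1$ (with $+$ as join and $\cdot$ as meet) sits inside $S_\dom$, which is exactly what the proposition needs. Each step checks out: $p\dom(q)\le 1$ and $\dom(p\dom(q))=\dom(pq)=0$ do force $p\dom(q)=0$ via $z\le\dom(z)$ for $z\le 1$; the commutativity of $\cdot$ on $S_\dom$ that you invoke for $m=\dom(p)\dom(q)=\dom(q)\dom(p)$ is licensed by the fact that $S_\dom$ is a lattice with $\cdot$ as meet; and the annihilation $m=(p+q)m=0$ closes the gap. Note that the paper itself gives no proof here — it cites \cite{DesharnaisS11} — and the standard argument there likewise reduces to showing that every complemented subidentity is a fixpoint of $\dom$. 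Your detour through the auxiliary element $m$ can be shortened: once you have $p\dom(q)=0$, compute directly $\dom(q)=(p+q)\dom(q)=p\dom(q)+q\dom(q)=q\dom(q)\le q\cdot 1=q$, which together with $q\le\dom(q)$ gives $\dom(q)=q$ at once (and symmetrically $\dom(p)=p$), with no need for the decomposition $\dom(p)=p+\dom(p)q$ or the element $m$ at all. The only point worth flagging is that the proposition as stated presupposes the existence of a largest boolean subalgebra bounded by $0$ and $1$ (in \cite{DesharnaisS11} it is exhibited as the set of complemented subidentities); your containment argument does not itself establish that existence, but it does deliver the containment claimed.
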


Axiom (\ref{eq:d5}) implies that $\dom$ is order preserving:
$x\le y \Rightarrow \dom(x)\le \dom(y)$. In addition,
$\dom(px) = p\dom(x)$ for all $p\in S_\dom$, $\dom(1)=1$, and
$\dom(\top) = 1$ if $S$ is bounded.  More importantly, \eqref{eq:lla}
can now be derived for all $p\in S_\dom$ (it need not hold for
$p\in S_1$)~\cite{DesharnaisS11}; it becomes an adjunction when $S$ is
bounded.

\begin{lemma}\label{P:d-adj}
  In any bounded domain semiring $S$, \eqref{eq:d-adj} holds for all
  $p\in S_\dom$.
\end{lemma}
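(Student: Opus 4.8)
The plan is to reduce \eqref{eq:d-adj} to \eqref{eq:lla}, which (as recorded just before the lemma) already holds for every $p\in S_\dom$ in an arbitrary domain semiring. So it suffices to prove that, for all $x\in S$ and $p\in S_\dom$,
\[
x\le px \iff x\le p\top .
\]
Chaining this equivalence with \eqref{eq:lla} then gives $\dom(x)\le p \iff x\le px \iff x\le p\top$, which is exactly \eqref{eq:d-adj}.

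For the forward implication of the displayed equivalence, assume $x\le px$. Since $S$ is bounded we have $x\le\top$, and multiplication preserves $\le$ in its right argument, so $px\le p\top$; hence $x\le px\le p\top$. This direction uses neither \eqref{eq:lla} nor any fixpoint property of $\dom$. For the converse, assume $x\le p\top$. Applying the order-preserving map $\dom$ yields $\dom(x)\le\dom(p\top)$, and I would then invoke the two auxiliary identities stated in the paragraph preceding the lemma: $\dom(qy)=q\dom(y)$ for $q\in S_\dom$, and $\dom(\top)=1$ when $S$ is bounded. Combining them gives $\dom(p\top)=p\dom(\top)=p\cdot 1=p$, so $\dom(x)\le p$, and by \eqref{eq:lla} this is equivalent to $x\le px$.

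I do not expect a genuine obstacle here: the argument is a short Galois-style computation. The one subtlety worth flagging is where each hypothesis is actually used — boundedness enters only through $\dom(\top)=1$, and the restriction $p\in S_\dom$ (rather than $p\in S_1$) is needed precisely so that $\dom(p\top)$ collapses to $p$ (equivalently, so that \eqref{eq:lla} is available); for a general subidentity this collapse fails, which is consistent with the remark in the excerpt that \eqref{eq:lla} need not hold for arbitrary $p\in S_1$. Provided the auxiliary facts $\dom(px)=p\dom(x)$ and $\dom(\top)=1$ are taken as given, the proof is complete.
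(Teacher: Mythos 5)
Your proof is correct and is essentially the paper's own argument: the paper shows $\dom(x)\le p \Rightarrow x=\dom(x)x\le px\le p\top$ directly and then derives $\dom(x)\le\dom(p\top)=p\dom(\top)=p$ from $x\le p\top$, which is exactly your computation with the pivot through \eqref{eq:lla} made explicit rather than inlined. The auxiliary facts you invoke ($\dom(qy)=q\dom(y)$ for $q\in S_\dom$, $\dom(\top)=1$, and \eqref{eq:lla} for $p\in S_\dom$) are precisely the ones the paper relies on, so nothing is missing.
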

\begin{proof}
$\dom(x)\le p$ implies 
$x = \dom(x) x \le px\le p \top$
and 
$\dom(x) \le \dom(p\top) = p\dom(\top) = p 1 = p$
follows from $x \le p\top$.
\end{proof}

As mentioned in the introduction, an antidomain operation is needed to
make the bounded distributive lattice $S_\dom$ boolean. 

\begin{definition}[\cite{DesharnaisS11}]\label{D:antidom}
  An \emph{antidomain semiring} is a semiring $S$ with a an
  operation $\ad:S\to S$ such that, for all $x,y\in S$,
\begin{align*}
  \ad(x)x = 0,\qquad
 \ad(x) + \ad(\ad(x)) = 1,\qquad
\ad(xy) \le \ad(x\ad(\ad(y))).
\end{align*}
\end{definition}

Antidomain models boolean complementation in the domain algebra; the
domain operation can be defined as $\dom = \ad\circ \ad$ in any
antidomain semiring $S$.  The second and third antidomain axioms then
simplify to $\ad(x)+\dom(x)=1$ and $\ad(xy)\le \ad(x\dom(y))$.  The
domain algebra $S_\dom$ of $S$ is the maximal boolean subalgebra of
$S_1$, as in Proposition~\ref{P:old-d-semiring-bool}. This leads to
the following result.

\begin{lemma}[\cite{DesharnaisS11}]\label{P:ad-testd}
  Let $(S,\ad)$ be an antidomain semiring. Then $(S,S_\dom,\dom)$ is a
  test dioid with domain.
\end{lemma}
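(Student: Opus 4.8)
The plan is to read the statement off from the domain-semiring structure that an antidomain semiring already carries, the only genuinely new input being that $\ad$ restricts to a complementation on $S_\dom$. First I would collect what comes for free. Every antidomain semiring is a domain semiring with $\dom=\ad\circ\ad$, hence a dioid, so $\le$ is a partial order and (d1)--(d5) are available; consequently $(S_\dom,+,\cdot,0,1)$ is a subsemiring of $S$ that is a bounded distributive lattice with $+$ as sup, $\cdot$ as inf and bounds $0$ and $1$, we have $S_\dom\subseteq S_1$ by (d3), and $\dom(px)=p\,\dom(x)$ for $p\in S_\dom$. Granting for the moment that $S_\dom$ is moreover a boolean algebra, it is then a boolean subalgebra of $S_1$ with least element $0$ and greatest element $1$, with $+$ as sup and closed under $\cdot$, so $(S,S_\dom)$ is a test dioid; the map $\dom\colon S\to S_\dom$ is a predomain operation because $x\le\dom(x)x$ is (d1) and $\dom(px)=p\,\dom(x)\le p1=p$ for $p\in S_\dom$; and locality $\dom(x\dom(y))\le\dom(xy)$ is just the $\le$-direction of (d2). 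So everything reduces to showing that the lattice $S_\dom$ is complemented.

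The only part carrying real content is therefore this complementation, and the one point that might look delicate --- that $\ad$ maps $S_\dom$ into itself --- turns out to be immediate from $\dom=\ad\circ\ad$. Fix $p\in S_\dom$, so that $\ad(\ad(p))=\dom(p)=p$. Then $\dom(\ad(p))=\ad(\ad(\ad(p)))=\ad(p)$, the last equality because the argument $\ad(\ad(p))$ of the outer $\ad$ equals $p$; hence $\ad(p)\in S_\dom$. Now $p$ and $\ad(p)$ both lie in $S_\dom$, where multiplication is the meet and hence commutative, so the first antidomain axiom gives $p\cdot\ad(p)=\ad(p)\cdot p=0$, while the second axiom, in the form $\ad(x)+\dom(x)=1$, gives $p+\ad(p)=\dom(p)+\ad(p)=1$. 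Thus $\ad(p)$ is the complement of $p$ in $S_\dom$, and the bounded distributive lattice $S_\dom$ is a boolean algebra.

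Combining the two paragraphs shows that $(S,S_\dom)$ is a test dioid and that $\dom$ is a predomain operation on it satisfying locality, which is exactly the assertion that $(S,S_\dom,\dom)$ is a test dioid with domain. The main obstacle is thus less a hard argument than careful bookkeeping: one has to keep the axioms straight in their original and simplified forms and use repeatedly that $S_\dom=\dom(S)$ is precisely the fixpoint set of $\ad\circ\ad$, while quoting the lattice-theoretic properties of $S_\dom$ from the domain-semiring material recalled earlier (and noting that the boolean-algebra claim also appears, without proof, in the discussion preceding the lemma).
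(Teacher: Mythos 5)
Your argument is correct and follows the same route the paper sketches in the remarks preceding the lemma (the paper itself cites the result from \cite{DesharnaisS11} without proof): take $\dom=\ad\circ\ad$, quote that this makes $S$ a domain semiring so that $S_\dom\subseteq S_1$ is a bounded distributive sublattice with $+$ as sup and $\cdot$ as inf, and observe that the predomain axioms and locality are immediate from (d1), $\dom(px)=p\,\dom(x)$ and (d2). The one step you add explicitly --- that $\ad$ maps $S_\dom$ into itself and gives complements there via $\ad(p)p=0$ and $\ad(p)+\dom(p)=1$ --- is verified correctly, and your reliance on the fact that antidomain semirings satisfy (d1)--(d5) is a legitimate citation of the same background the paper takes for granted.
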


If the domain algebra $S_\dom$ of a domain semiring $S$ happens to be
a boolean algebra, it must be the maximal boolean subalgebra of $S_1$
by Proposition~\ref{P:max-ds}, so that $S$ is again a test dioid with
$B=S_\dom$.  Antidomain is then definable.
\begin{lemma}\label{P:dom-ad}
  Every domain semiring with  boolean domain algebra is an antidomain
  semiring.
\end{lemma}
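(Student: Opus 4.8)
The plan is to define the antidomain operation as relative complementation of the domain operation inside the boolean domain algebra, and then verify the three antidomain axioms more or less directly.

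First I would fix the setup. Since $S_\dom$ is assumed boolean, Proposition~\ref{P:max-ds} forces it to coincide with the largest boolean subalgebra of $S_1$ bounded by $0$ and $1$; in particular $(S,S_\dom)$ is a test dioid, so Lemmas~\ref{P:test-meet} and~\ref{P:galois-aux} become available with $B=S_\dom$. Writing $(\_)'$ for complementation in $S_\dom$, I would set $\ad(x)=\dom(x)'$, which is well defined because $\dom(x)\in S_\dom$. Before checking the axioms I would record the book-keeping fact that $\dom$ restricts to the identity on $S_\dom$ (as $S_\dom=\{x\mid\dom(x)=x\}$), from which $\ad\circ\ad=\dom$ follows: $\ad(\ad(x))=\dom(\dom(x)')'=(\dom(x)')'=\dom(x)$, the middle step using $\dom(x)'\in S_\dom$.

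With this in hand, two of the three antidomain axioms are immediate. The identity $\ad(x)+\ad(\ad(x))=1$ becomes $\dom(x)'+\dom(x)=1$, which is just excluded middle in the boolean algebra $S_\dom$ (whose top is $1$ and whose join is $+$). The inequality $\ad(xy)\le\ad(x\ad(\ad(y)))$ becomes, after substituting $\ad(\ad(y))=\dom(y)$, the statement $\dom(xy)'\le\dom(x\dom(y))'$, and this is in fact an equality by axiom~\eqref{eq:d2}.

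The substantive clause is the first axiom, $\ad(x)x=0$, i.e.\ $\dom(x)'x=0$, and this is the only place where the boolean hypothesis and the test-dioid machinery really get used. Here I would apply Lemma~\ref{P:galois-aux}(1) in the test dioid $(S,S_\dom)$ with $p=\dom(x)\in S_\dom$: since $x\le\dom(x)x$ by~\eqref{eq:d1}, the equivalence $x\le px\Leftrightarrow p'x=0$ yields $\dom(x)'x=0$ at once. I do not expect a genuine obstacle; the only point requiring a moment's care is that the complementation furnished by the boolean structure of $S_\dom$ is literally the operation $(\_)'$ appearing in Lemma~\ref{P:galois-aux}, which holds because that lemma is stated for an arbitrary test algebra $B$ and here $B=S_\dom$.
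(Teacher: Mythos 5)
Your proposal is correct and follows essentially the same route as the paper: define $\ad=(\_)'\circ\dom$, obtain the first axiom $\ad(x)x=0$ from Lemma~\ref{P:galois-aux}(1) applied with $p=\dom(x)$ in the test dioid $(S,S_\dom)$, and observe that the other two axioms reduce to excluded middle in $S_\dom$ and to \eqref{eq:d2}. You simply spell out in detail what the paper dismisses as "trivial", including the useful check that $\ad\circ\ad=\dom$.
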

\begin{proof}
  With $\ad = (\_)'\circ \dom$, the first antidomain axiom follows
  immediately from Lemma~\ref{P:galois-aux}(1); the remaining two
  axioms hold trivially.
\end{proof}
\begin{example}\label{ex:lattice2}
  In the dioid $0<a<1$ from Example~\ref{ex:lattice1},
  $\dom:0\mapsto 0,a\mapsto 1,1\mapsto 1$ defines another domain
  semiring with $S_\dom=\{0,1\}=B$. So $S_\dom\subset S_1$ is the
  maximal boolean subalgebra in $S_1$. In addition,
  $\ad:0\mapsto 1,a\mapsto 0,1\mapsto 0$ defines the corresponding
  antidomain semiring. Finally, this dioid is a test dioid by
  Lemma~\ref{P:ad-testd} and in fact a test dioid with domain in which
  $B=S_\dom\subset S_1$.\qed
\end{example}

As powerset algebras, relation and path domain semirings have of course
boolean domain algebras with complement $x'=1\cap \overline x$, where
$\overline x$ denotes complementation on the entire powerset
algebra. Both are therefore antidomain semirings, with the operations
shown in the introduction. 

We finish this section with an aside on fullness:\footnote{We are
  grateful to a reviewer for reminding us of this fact.}  While every
test dioid with domain and every antidomain semiring is full
whenever $S_\dom = S_1$ by Proposition~\ref{P:old-d-semiring-bool} and
Lemma~\ref{P:ad-testd}, in domain semirings, $S_\dom=S_1$ need not
imply that $S_\dom$ is boolean (Example~\ref{ex:lattice1}) and vice
versa (Example~\ref{ex:lattice2}).  A domain semiring $S$ is therefore
full precisely when $S_\dom$ is boolean and equal to $S_1$.


\section{Coincidence Result}\label{S:coincidence}

The results of Section~\ref{S:domain-ops} suggest that the two types
of domain semiring coincide when the underlying dioid is full.  We now
spell out this coincidence.

\begin{proposition}\label{P:tdd-ds}
  Let $(S,B,\dom)$ be a test dioid with domain. Then $(S,\dom)$ is a
  domain semiring with $S_\dom=B$ and an antidomain semiring with
  $\ad= (\_)'\circ \dom$.
\end{proposition}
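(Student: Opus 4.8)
The plan is to verify the five domain-semiring axioms (d1)--(d5) for $(S,\dom)$, then identify $S_\dom$ with $B$, and finally invoke Lemma~\ref{P:dom-ad} for the antidomain part. Three of the axioms come for free: (d1) is the predomain axiom $x\le\dom(x)x$; (d3) holds since $\dom(S)\subseteq B\subseteq S_1$; and (d4) follows from $\dom(0)=\dom(0\cdot 0)\le 0$, instantiating $\dom(px)\le p$ at $p=0$. Axiom (d2) is immediate from locality together with the weak locality $\dom(xy)\le\dom(x\dom(y))$, noted above to hold in every test dioid with predomain. So the real work is (d5), additivity.

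For $\dom(x+y)\le\dom(x)+\dom(y)$ I would use the (lla) characterisation of predomain (Lemma~\ref{P:old-d-props}): with $p=\dom(x)+\dom(y)\in B$, it suffices to show $x+y\le p(x+y)$; distributing and using that $x\le\dom(x)x$ and $\dom(x)\le 1$ force $\dom(x)x=x$ (likewise for $y$) gives $p(x+y)=\dom(x)(x+y)+\dom(y)(x+y)\ge \dom(x)x+\dom(y)y=x+y$. For the reverse inequality I would first check that $\dom$ is order preserving: if $x\le y$ then Lemma~\ref{P:galois-aux}(1) applied to $y\le\dom(y)y$ yields $\dom(y)'y=0$, hence $\dom(y)'x\le\dom(y)'y=0$, so $x\le\dom(y)x$ by Lemma~\ref{P:galois-aux}(1) again, whence $\dom(x)\le\dom(y)$ by (lla). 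Applying this to $x\le x+y$ and $y\le x+y$ and using that $+$ is the sup in $B$ gives $\dom(x)+\dom(y)\le\dom(x+y)$. The two inequalities together yield (d5), so $(S,\dom)$ is a domain semiring.

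That $S_\dom=B$ is Lemma~\ref{P:old-d-fix}; alternatively, $\dom(S)\subseteq B$ gives $S_\dom\subseteq B$, while for $p\in B$ one has $\dom(p)=\dom(p\cdot 1)\le p$ by the predomain axiom and $p\le\dom(p)p\le\dom(p)$, so $\dom(p)=p$ and $B\subseteq S_\dom$. Since $B$ is a boolean algebra, $(S,\dom)$ is a domain semiring with boolean domain algebra, so Lemma~\ref{P:dom-ad} applies and makes it an antidomain semiring; the proof of that lemma exhibits exactly $\ad=(\_)'\circ\dom$ as the antidomain operation (its first axiom $\ad(x)x=0$ being Lemma~\ref{P:galois-aux}(1), the other two routine once $\dom$ fixes $B$ pointwise).

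I expect the only genuine obstacle to be (d5). The two inequalities are mildly asymmetric --- one is handled directly via distributivity, the other needs the detour through the relative complement supplied by Lemma~\ref{P:galois-aux}(1) --- so the point of care is order preservation of $\dom$, from which the second half falls out. Everything else is bookkeeping inside the boolean algebra $B$ or a direct appeal to results already recorded above.
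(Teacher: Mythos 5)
Your proof is correct and follows the same overall route as the paper: verify the domain semiring axioms, identify $S_\dom$ with $B$ (the paper cites Proposition~\ref{P:old-d-semiring-bool} together with Proposition~\ref{P:max-ds}, or alternatively Lemma~\ref{P:old-d-fix} as you do), and obtain the antidomain structure from Lemma~\ref{P:dom-ad}. The only difference is that the paper dispatches the derivation of (d1)--(d5) by citing \cite{DesharnaisMS06}, whereas you carry it out explicitly --- and your derivation, including the order-preservation argument via Lemma~\ref{P:galois-aux}(1) and \eqref{eq:lla} needed for the hard half of (d5), is sound.
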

\begin{proof}
  The domain semiring axioms are derivable in test dioids with
  domain~\cite{DesharnaisMS06}; the antidomain axioms follow by
  Lemma~\ref{P:dom-ad}. Moreover, $B$ is the maximal boolean
  subalgebra of $S_1$ by Proposition~\ref{P:old-d-semiring-bool}, and
  thus equal to $S_\dom$ by Proposition~\ref{P:max-ds} (alternatively
  Lemma~\ref{P:old-d-fix}).
\end{proof}

We know from Lemma~\ref{P:ad-testd} that every antidomain semiring is
a test dioid with domain. Hence, by Proposition~\ref{P:tdd-ds},
antidomain semirings and test dioids with domain are interdefinable
(see also~\cite{DesharnaisS11}).  For the other converse of
Proposition~\ref{P:tdd-ds} we consider full domain semirings $S$ where
$S_\dom=S_1$ is a boolean algebra by Proposition~\ref{P:max-ds}. These
are test dioids, hence \eqref{eq:lla} can be used to define domain.

\begin{corollary}\label{P:fds-tdd-cor}
  Let $S$ be a full dioid with map $\dom:S\to S$. Then \eqref{eq:lla}
  holds for all $x\in S$ and $p\in S_1$ if and only if the predomain
  axioms
\begin{equation*}
  x\le \dom(x) x\qquad \text{ and }\qquad \dom(px) \le p
\end{equation*}
 from Definition~\ref{D:domain1} hold for all $x\in S$ and
  $p\in S_1$.
\end{corollary}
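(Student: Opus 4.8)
The plan is to reduce the statement to Lemma~\ref{P:old-d-props} by observing that a full dioid, equipped with $B=S_1$, is a test dioid. So first I would check that $(S,S_1)$ meets the requirements of a test dioid: by fullness $S_1$ is a boolean algebra bounded by $0$ and $1$ with $+$ as its sup, and it is closed under multiplication since $pq\le p\cdot 1=p\le 1$ whenever $p,q\le 1$ (so a product of subidentities is again a subidentity; by Lemma~\ref{P:test-meet} it coincides with the meet of $S_1$). Hence $(S,S_1)$ is a test dioid with test algebra $B=S_1$.

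Next I would bridge the mismatch between the map $\dom\colon S\to S$ appearing in the corollary and the map $\dom\colon S\to B$ required by Lemma~\ref{P:old-d-props}. Instantiating either side of the claimed equivalence at $p=1\in S_1$ already forces $\dom$ into $S_1$: the right-hand predomain axiom gives $\dom(x)=\dom(1\cdot x)\le 1$ at once, while \eqref{eq:lla} gives $\dom(x)\le 1\Leftrightarrow x\le 1\cdot x=x$, whose right-hand side is trivially true. Thus, under either hypothesis, $\dom$ corestricts to $S_1=B$, and ``for all $p\in S_1$'' and ``for all $p\in B$'' say the same thing here.

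With these observations the corollary is precisely the equivalence of items (1) and (2) of Lemma~\ref{P:old-d-props} for the test dioid $(S,S_1)$ and the map $\dom\colon S\to S_1$: item~(1) is the conjunction of the two predomain axioms and item~(2) is \eqref{eq:lla}, both quantified over $p\in S_1$. I would then simply invoke that lemma.

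I expect no genuine obstacle; the statement is essentially a repackaging of Lemma~\ref{P:old-d-props} once one notices that full dioids are test dioids. The only two points needing a word of justification are that $S_1$ is closed under multiplication and that $\dom$ automatically lands in $S_1$ under either hypothesis. If a self-contained argument were preferred, one could instead reprove (1)$\Leftrightarrow$(2) in place, using idempotence of $\cdot$ on $S_1$ (so $p(px)=px$) for the direction \eqref{eq:lla}$\Rightarrow$ predomain, and monotonicity of $\dom$ together with $\dom(px)\le p$ for the converse; but the reduction above is shorter and reuses what is already available.
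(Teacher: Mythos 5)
Your proof is correct and follows exactly the paper's route: the paper's own proof is the one-line observation that $S$ is a test dioid with $B=S_1$, so Lemma~\ref{P:old-d-props}(1)$\Leftrightarrow$(2) applies. You merely fill in the (correct) details that the paper leaves implicit, namely closure of $S_1$ under multiplication and the fact that either hypothesis forces $\dom$ to land in $S_1$.
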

\begin{proof}
  As $S$ is a test dioid with $B=S_1$, Lemma~\ref{P:old-d-props}(1)
  applies. 
\end{proof}
\begin{lemma}\label{P:fds-tdd-lem}
  Let $S$ be a full dioid with map $\dom:S\to S$ that satisfies
  \eqref{eq:lla} for all $x\in S$ and $p\in S_1$. Then $(S,S_1,\dom)$
  is a test dioid with predomain and $S_\dom=S_1$.
\end{lemma}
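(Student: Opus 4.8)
The plan is to verify the three ingredients of the claim separately: that $(S,S_1)$ is a test dioid in the sense of the first definition, that $\dom$ is a predomain operation on it, and that $S_\dom=S_1$. The argument is essentially bookkeeping on top of Corollary~\ref{P:fds-tdd-cor} and the elementary facts about subidentities recorded earlier.

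First I would observe that $(S,S_1)$ is a test dioid with $B=S_1$. Fullness of $S$ gives that $S_1$ is a boolean algebra bounded by $0$ and $1$ with $+$ as sup, and $S_1$ is closed under multiplication since $p,q\le 1$ yields $pq\le p1=p\le 1$; so $S_1$ is a boolean subalgebra of $S_1$ with least element $0$, greatest element $1$, on which $+$ coincides with sup. Next, I would pin down that $\dom$ actually maps into $S_1$: this is the one place where \eqref{eq:lla} is used directly, by instantiating $p:=1$, so that $\dom(x)\le 1\Leftrightarrow x\le 1x=x$, and the right-hand side holds trivially; hence $\dom:S\to S_1$. Now \eqref{eq:lla} holds for this $\dom$ for all $x\in S$ and $p\in S_1=B$, so Corollary~\ref{P:fds-tdd-cor} (equivalently, Lemma~\ref{P:old-d-props}(1) for the test dioid $(S,S_1)$) delivers the two predomain axioms $x\le\dom(x)x$ and $\dom(px)\le p$, and $(S,S_1,\dom)$ is a test dioid with predomain.

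Finally, for $S_\dom=S_1$: the inclusion $S_\dom\subseteq S_1$ is immediate, since $\dom(x)=x$ forces $x=\dom(x)\le 1$. For the reverse inclusion, fix $p\in S_1$; from the predomain axiom $p\le\dom(p)p$ together with $\dom(p)p\le\dom(p)1=\dom(p)$ we get $p\le\dom(p)$, while \eqref{eq:lla} with $x:=p$ reduces $\dom(p)\le p$ to $p\le pp$, which holds because multiplication of subidentities is their meet (Lemma~\ref{P:test-meet}), whence $pp=p$. Thus $\dom(p)=p$ and $p\in S_\dom$. I do not expect a genuine obstacle here; the only point requiring care is the ordering of the steps, namely that one must first establish $\dom(x)\le 1$ before invoking Corollary~\ref{P:fds-tdd-cor}, since it is precisely this that makes $S_1$ the legitimate codomain of the predomain operation.
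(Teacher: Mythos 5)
Your proof is correct and follows essentially the same route as the paper: Corollary~\ref{P:fds-tdd-cor} delivers the predomain axioms, and the two inclusions $S_\dom\subseteq S_1$ and $S_1\subseteq S_\dom$ are obtained exactly as in the paper, via $p\le\dom(p)p\le\dom(p)$ and via \eqref{eq:lla} together with $pp=p$. Your extra bookkeeping step of first instantiating \eqref{eq:lla} at $p=1$ to show that $\dom$ genuinely maps into $S_1$ is a detail the paper leaves implicit, and it is a sensible addition.
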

\begin{proof}
  $S$ is a test dioid with predomain by
  Corollary~\ref{P:fds-tdd-cor}. $S_\dom\subseteq S_1$ because
  $\dom(x) \le 1$ in any test dioid with
  predomain~\cite{DesharnaisMS06}. $S_1\subseteq S_\dom$ because $p\le 1$
  implies $p=\dom(p)p \le \dom(p)$ and $\dom(p) \le p$ because $pp=p$,
  using \eqref{eq:lla}. 
\end{proof}

\begin{proposition}\label{P:fds-tdd}
  Let $(S,\dom)$ be a full domain semiring. 
Then $(S,S_\dom,\dom)$ is a test
  dioid with domain. 
\end{proposition}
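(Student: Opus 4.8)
The plan is to reduce the statement entirely to results already on the table. The first step is to identify the domain algebra with the subidentities, i.e.\ to prove $S_\dom = S_1$. The inclusion $S_\dom\subseteq S_1$ is immediate from axiom~(\ref{eq:d3}): if $\dom(x)=x$ then $x=\dom(x)\le 1$. For the reverse inclusion, fullness of $S$ says precisely that $S_1$ is a boolean algebra bounded by $0$ and $1$ with $+$ as join and $\cdot$ as meet; hence $S_1$ is a boolean subalgebra of $S$ bounded by $0$ and $1$, and trivially the largest such, since any subalgebra bounded above by $1$ is contained in $S_1$. Proposition~\ref{P:max-ds} then yields $S_1\subseteq S_\dom$, so $S_\dom=S_1$ and, in particular, the domain algebra is boolean.

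The second step is to obtain the predomain axioms. Here I would invoke the fact recalled just before Lemma~\ref{P:d-adj}, namely that \eqref{eq:lla} is derivable in every domain semiring for all $p\in S_\dom$. Combined with $S_\dom=S_1$ from the first step, this gives \eqref{eq:lla} for all $x\in S$ and all $p\in S_1$. Now Lemma~\ref{P:fds-tdd-lem} applies directly and tells us that $(S,S_1,\dom)$ is a test dioid with predomain (and, as a byproduct, re-confirms $S_\dom=S_1$). Rewriting $S_1$ as $S_\dom$, we have that $(S,S_\dom,\dom)$ is a test dioid with predomain.

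The third step is the locality axiom $\dom(x\dom(y))\le\dom(xy)$, which is immediate --- in fact with equality --- from axiom~(\ref{eq:d2}). Together with the previous step this gives that $(S,S_\dom,\dom)$ is a test dioid with domain, as claimed.

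I do not expect a genuine obstacle: the argument is essentially bookkeeping, assembling Proposition~\ref{P:max-ds}, axiom~(\ref{eq:d3}), the derived property~\eqref{eq:lla}, Lemma~\ref{P:fds-tdd-lem} and axiom~(\ref{eq:d2}). The only point that deserves a moment's care is verifying that $S_1$, equipped with the ambient operations $+$ and $\cdot$, really is a boolean subalgebra of $S$ in the sense required by Proposition~\ref{P:max-ds}; but this is exactly the content of fullness, so nothing further is needed.
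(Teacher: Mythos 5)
Your proof is correct and follows essentially the same route as the paper: establish $S_\dom=S_1$ via Proposition~\ref{P:max-ds} (the paper does this in the text just before Corollary~\ref{P:fds-tdd-cor}), deduce \eqref{eq:lla} for all $p\in S_1$, apply Lemma~\ref{P:fds-tdd-lem} to get a test dioid with predomain, and upgrade to domain via locality from axiom~(\ref{eq:d2}). Your write-up merely spells out the details that the paper's one-line proof leaves implicit.
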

\begin{proof}
  If $(S,\dom)$ is a full domain semiring, then \eqref{eq:lla} is
  derivable and locality holds. Then $(S,S_\dom,\dom)$ is a test dioid with
  predomain by Lemma~\ref{P:fds-tdd-lem} and therefore a test dioid
  with domain because of locality.
\end{proof}

Our coincidence result, through which the two types of domain
semirings are united, then follows easily from
Propositions~\ref{P:tdd-ds} and \ref{P:fds-tdd}.

\begin{theorem}\label{P:main-thm}
  A full test dioid is a test dioid with domain if and only if it is a
  domain semiring. 
\end{theorem}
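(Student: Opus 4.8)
The plan is to read the theorem straight off the two propositions that have already been set up for exactly this purpose, Propositions~\ref{P:tdd-ds} and~\ref{P:fds-tdd}, and to spend whatever care is needed on a single bookkeeping point: reconciling the three sets $B$, $S_\dom$ and $S_1$, so that the two candidate structures turn out to be literally the same object and not merely isomorphic ones.

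For the forward direction I would start from a full test dioid $(S,B)$ carrying an operation $\dom$ that makes $(S,B,\dom)$ a test dioid with domain. Fullness says $S_1$ is already a boolean algebra, hence it is trivially its own largest boolean subalgebra, so Proposition~\ref{P:old-d-semiring-bool} forces $B=S_1$. Proposition~\ref{P:tdd-ds} then applies without change and delivers that $(S,\dom)$ is a domain semiring, with $S_\dom=B=S_1$ into the bargain. (Note that fullness is not actually used for this implication: any test dioid with domain is a domain semiring.)

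For the converse I would take a domain semiring $(S,\dom)$ whose underlying dioid is full — that is, a full domain semiring — and invoke Proposition~\ref{P:fds-tdd} to conclude that $(S,S_\dom,\dom)$ is a test dioid with domain. It then remains to check that this recovers the test dioid we started from, i.e.\ that $S_\dom=S_1$: the inclusion $S_\dom\subseteq S_1$ is immediate from \eqref{eq:d3}, and $S_1\subseteq S_\dom$ follows because in a full dioid $S_1$ is the largest boolean subalgebra of $S$ bounded by $0$ and $1$, and every such subalgebra lies inside $S_\dom$ by Proposition~\ref{P:max-ds}. Hence $S_\dom=S_1=B$ and the test dioid with domain produced is exactly $(S,B,\dom)$.

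I do not expect a genuine obstacle: both implications are essentially citations of the propositions above. The one thing to be vigilant about is precisely the identification just made — a full test dioid is in principle allowed to carry a proper boolean subalgebra $B\subsetneq S_1$, so the statement only becomes meaningful once one observes, via Propositions~\ref{P:old-d-semiring-bool} and~\ref{P:max-ds}, that the mere presence of a domain operation of either flavour pins $B$ (respectively $S_\dom$) down to all of $S_1$. After that, the equivalence is a one-line consequence of Propositions~\ref{P:tdd-ds} and~\ref{P:fds-tdd}.
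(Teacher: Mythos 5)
Your proof is correct and is essentially the paper's own argument: the paper states that the theorem "follows easily from Propositions~\ref{P:tdd-ds} and \ref{P:fds-tdd}" and gives no further detail, which is exactly the two-citation structure you use. Your extra bookkeeping identifying $B=S_\dom=S_1$ is sound (and is already contained in Lemma~\ref{P:fds-tdd-lem} and Propositions~\ref{P:old-d-semiring-bool} and~\ref{P:max-ds}), so it is a welcome elaboration rather than a deviation.
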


On full dioids, domain can therefore be axiomatised either
equationally by the domain semiring axioms or those of test dioids
with domain, or alternatively by \eqref{eq:lla} and locality. The
domain algebras of relation and paths domain semirings, in particular,
are full.

In any dioid, hence in particular any domain semiring, fullness can be
enforced, for instance, by requiring that every $p\in S_1$ be
\emph{complemented} within $S_1$, that is, there exists an element
$q\in S _1$ such that $p+q=1$ and $qp=0$.  It then follows that $S_1$
is a boolean algebra~\cite{DesharnaisS11}.

Alternatively, in any test dioid with domain or any antidomain
semiring, $S_\dom=S_1$ whenever $x\le 1 \Rightarrow \dom(x)=x$, for
all $x\in S$. Yet Example~\ref{ex:lattice2} shows that this
implication does not suffice to make $S_\dom$ boolean in arbitrary
domain semirings.

Finally, locality need not hold in full test dioids that satisfy
\eqref{eq:lla}.

\begin{example}\label{ex:locality}
  Consider the full test dioid with $S=\{0,1,a,\top\}$ in which $a$
  and $1$ are incomparable with respect to $\le$, $aa =0$,
  multiplication is defined by $a\top = \top a = a$ and
  $\top\top = \top$, and $\dom$ maps $0$ to $0$ and every other
  element to $1$. Then \eqref{eq:lla} holds, but
  $\dom(a\dom(a)) = \dom(a1) = \dom(a) = 1 > 0 = \dom(0) =
  \dom(aa)$.\qed
\end{example}


\section{Examples}

The restriction to full test dioids is natural for concrete powerset
algebras, like the relation and path algebras mentioned. It is
captured abstractly, for instance, by boolean monoids and quantales.

A \emph{boolean monoid}~\cite{DesharnaisMS06} is a structure
$(S,+,\sqcap,\cdot,\overline{\phantom{x}},0,1,\top)$ such that
$(S,+,\cdot,0,1)$ is a semiring and
$(S,+,\sqcap,\overline{\phantom{x}},0,\top)$ a boolean algebra.  As
all sups, infs and multiplications of subidentities stay below $1$,
every boolean monoid is a full bounded dioid; boolean complementation
on $S_1$ is given by $p' = 1\sqcap \overline{p}$ for all $p\in S_1$.

Domain can now be axiomatised as an endofunction, either equationally
using the domain semiring or test dioid with domain axioms, or by the
adjunction \eqref{eq:d-adj} and locality, as in
Section~\ref{S:coincidence}.  Once again, the antidomain operation
$\ad$ is complementation on $S_1$. Theorem~\ref{P:main-thm} has the
following instance.

\begin{corollary}\label{P:main-thm-cor1}
  A boolean monoid is a test dioid with domain if and only if it is a
  domain semiring.  
\end{corollary}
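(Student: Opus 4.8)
The plan is to obtain this corollary as a direct instance of Theorem~\ref{P:main-thm}. The text preceding the corollary already records that every boolean monoid is a full bounded dioid, with boolean complementation on $S_1$ given by $p' = 1\sqcap\overline p$. So the first step is to fix the test algebra as $B = S_1$ and check that $(S,S_1)$ is indeed a test dioid: $S_1$ is a boolean algebra bounded by $0$ and $1$ with $+$ as sup by fullness, it contains $0$ and $1$, and it is closed under multiplication since $x,y\le 1$ forces $xy\le 1$. Hence a boolean monoid, equipped with its canonical test algebra $S_1$, is a full test dioid in the sense of Theorem~\ref{P:main-thm}.

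With this identification in place the corollary is almost immediate. For the forward direction, if $(S,S_1,\dom)$ is a test dioid with domain, then Proposition~\ref{P:tdd-ds} turns it into a domain semiring $(S,\dom)$ (and, incidentally, an antidomain semiring). For the converse, if $(S,\dom)$ is a domain semiring on the boolean monoid $S$, then $S$ is a full dioid, so Proposition~\ref{P:fds-tdd} makes $(S,S_\dom,\dom)$ a test dioid with domain; and since $S_\dom = S_1$ in a full domain semiring (by the reasoning of Lemma~\ref{P:fds-tdd-lem}, as \eqref{eq:lla} is derivable there), this is precisely the test dioid with domain on $S$ with its canonical test algebra.

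The only point that needs a little care — and the closest thing to an obstacle — is the bookkeeping about the test algebra: the notion ``test dioid with domain'' presupposes a choice of $B$ whereas ``domain semiring'' does not, so one must verify that the natural choice $B = S_1$ is forced on both sides. On the test-dioid side this is Proposition~\ref{P:old-d-semiring-bool} together with fullness (the test algebra $B$ is the largest boolean subalgebra of $S_1$, which is all of $S_1$), and on the domain-semiring side it is the equality $S_\dom = S_1$ noted above; consequently the map $\dom$ really denotes the same object in the two readings, and the stated equivalence follows. Everything else is the content of Theorem~\ref{P:main-thm} applied verbatim, so no further calculation is required.
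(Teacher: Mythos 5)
Your proof is correct and follows exactly the route the paper intends: it observes (as the paper does in the paragraph preceding the corollary) that a boolean monoid is a full bounded dioid with $B=S_1$, and then instantiates Theorem~\ref{P:main-thm} via Propositions~\ref{P:tdd-ds} and~\ref{P:fds-tdd}. The only cosmetic remark is that the equality $S_\dom=S_1$ on the domain-semiring side is most directly obtained from Proposition~\ref{P:max-ds} together with axiom~(\ref{eq:d3}), rather than by rerunning Lemma~\ref{P:fds-tdd-lem}, but this does not affect the argument.
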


Quantales capture the presence of arbitrary sups and infs in powerset
algebras more faithfully. Formally, a \emph{quantale}
$(Q,\le,\cdot, 1)$ is a complete lattice $(Q,\le)$ and a monoid
$(Q,\cdot,1)$ such that composition preserves all sups in its first
and second argument. We write $\Sup$ for the sup
and $\Inf$ for the inf operator.  We also write $0= \Inf Q$ for the
least and $\top=\Sup Q$ for the greatest element of $Q$, and $\lor$
and $\land$ for binary sups and infs.

A quantale is \emph{boolean} if its complete lattice is a boolean
algebra.  Every boolean quantale is obviously a boolean monoid, and
every finite boolean monoid a boolean quantale. If $Q$ is a boolean
quantale, then $Q_1$ forms even a complete boolean algebra. In boolean
quantales, predomain, domain and antidomain operations can therefore
be axiomatised like in boolean monoids, and we obtain another instance
of Theorem~\ref{P:main-thm}, analogous to
Corollary~\ref{P:main-thm-cor1}, simply by replacing ``boolean
monoid'' with ``boolean quantale''. 

As for domain semirings, $Q_\dom$ need neither be full nor boolean in
an arbitrary domain quantale: the dioids in Examples~\ref{ex:lattice1}
and \ref{ex:lattice2} are defined over finite semilattices and hence
complete lattices. They are therefore quantales. In this case, the
identity $\dom(x\land 1) = x\land 1$ forces $Q_\dom = Q_1$, because
this inequality implies $\dom(x)=x$ for all $x\le 1$, and in fact a
domain semiring with a meet operation suffices for the
proof.\footnote{Again we owe this observation to a reviewer.} In
antidomain quantales, this identity thus implies fullness. Whether or
how the fullness could be forced equationally in arbitrary domain
semirings or antidomain semirings is left open.


\section{Domain Quantales}

Some loose ends remain to be tied together in this note as well: 
\begin{itemize}
\item Does the interaction of domain with arbitrary sups and infs in
  quantales require additional axioms?
\item Why has domain not been axiomatised explicitly using
the adjunction \eqref{eq:d-adj}, at least for boolean quantales?  
\item And
why has domain in boolean monoids or quantales not been axiomatised
explicitly by $\dom(x) = 1 \land x\top$, as in relation algebra? 
\end{itemize}
This section answers these questions.

First, we consider the domain semiring axioms in arbitrary quantales
and argue that additional sup and inf axioms are unnecessary.

\begin{definition}
  A \emph{domain quantale} is a quantale that is also a domain
  semiring.
\end{definition}

As every quantale is a bounded dioid, the adjunction \eqref{eq:d-adj}
holds for every $p\in Q_d$.  In addition, domain interacts with sups
and infs as follows.

\begin{lemma}\label{P:d-props-quantale}
  In every domain quantale,
  \begin{enumerate}
  \item $\dom(\Sup X) = \Sup\dom(X)$,
  \item $\dom(\Inf X)\le \Inf\dom(X)$,
\item $\dom(x)( \Inf Y) = \Inf \dom(x)Y$ for all $Y\neq \emptyset$.
\end{enumerate}
\end{lemma}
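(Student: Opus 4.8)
The plan is to treat the three parts separately, with almost all of the effort going into~(1).

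Part~(2) is immediate: axiom~\eqref{eq:d5} makes $\dom$ order preserving, so from $\Inf X\le x$ for every $x\in X$ we get $\dom(\Inf X)\le\dom(x)$, and hence $\dom(\Inf X)\le\Inf\dom(X)$. Nothing further is needed.

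For~(3) I would set $p=\dom(x)$, so that $p\in Q_\dom$ and in particular $p\le 1$. The inequality $p(\Inf Y)\le\Inf pY$ is just monotonicity of composition. For the converse, put $w=\Inf pY$. Since $Y\neq\emptyset$, fix $y_0\in Y$; then $w\le py_0\le p\top$, so~\eqref{eq:d-adj} (available for $p\in Q_\dom$ because a quantale is bounded) gives $\dom(w)\le p$, and~\eqref{eq:lla} then yields $pw=w$. On the other hand $p\le 1$ forces $py\le y$ for every $y\in Y$, so $w=\Inf pY\le\Inf Y$; multiplying on the left by $p$ gives $w=pw\le p(\Inf Y)$. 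Together with the first inequality this proves the identity, and non-emptiness of $Y$ is used exactly once, to produce $y_0$.

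For~(1), the inclusion $\Sup\dom(X)\le\dom(\Sup X)$ is again monotonicity. My plan for the reverse is to apply~\eqref{eq:d-adj} with $p:=\Sup\dom(X)$: every $x\in X$ satisfies $x=\dom(x)x\le\dom(x)\top\le p\top$, so $\Sup X\le p\top$, and hence $\dom(\Sup X)\le p$ by~\eqref{eq:d-adj}. This argument applies only once we know $p\in Q_\dom$, so the whole of~(1) reduces to showing that $Q_\dom$ is closed under arbitrary suprema---equivalently, that $\dom$ preserves all suprema, not just the binary ones given by~\eqref{eq:d5}.

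This closure is the step I expect to be the main obstacle. The Galois connection $\dom\dashv(\_)\top$ between $Q$ and $Q_\dom$ does make $\dom$ preserve suprema, but a priori only in the sense that $\dom(\Sup X)$ is the supremum of $\dom(X)$ computed \emph{within} $Q_\dom$, which in principle could exceed the supremum computed in the ambient quantale. To bridge this I would exploit that $\dom$ restricts to the identity on $Q_\dom$ (so the connection exhibits $Q_\dom$ as a retract of $Q$), together with the facts that $Q_1$ is closed under all suprema in a quantale and that composition is completely additive; concretely, for a family $\{p_i\}\subseteq Q_\dom$ with $s=\Sup_i p_i$ taken in $Q$, I would try to prove $\dom(s)\le s$ directly, using~\eqref{eq:lla} and the identity $\dom(p\top)=\dom(p)$ (which follows from~\eqref{eq:d2} and $\dom(\top)=1$). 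Once $Q_\dom$ is known to be closed under suprema, the rest of~(1) is routine.
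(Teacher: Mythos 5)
Your parts (2) and (3) are correct. Part (2) matches the paper. Part (3) is a mild variant of the paper's argument: the paper gets $\dom(\Inf\dom(x)Y)\le\dom(x)$ from monotonicity and $\dom(\dom(x)y)=\dom(x)\dom(y)$, whereas you go through \eqref{eq:d-adj} and \eqref{eq:lla}; both routes are sound, and your use of \eqref{eq:lla} for $p\in Q_\dom$ is legitimate since the paper records that it is derivable for domain elements.

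Part (1), however, has a genuine gap. You correctly isolate the crux --- whether $Q_\dom$ is closed under arbitrary suprema computed in $Q$ --- but you leave it as a plan, and in fact no such argument can succeed: the claim is false. Take $Q=[0,1]$ with its usual order and $xy=\min(x,y)$ (a quantale, since $[0,1]$ is a complete chain), and set $\dom(x)=x$ for $x<1/2$ and $\dom(x)=1$ for $x\ge 1/2$. All five domain axioms hold (for \eqref{eq:d2} split on whether $y<1/2$). Then $Q_\dom=[0,1/2)\cup\{1\}$, and $X=[0,1/2)\subseteq Q_\dom$ has $\Sup X=1/2\notin Q_\dom$: here $\dom(\Sup X)=1$ while the supremum of $\dom(X)=X$ taken in $Q$ is $1/2$, so part (1) read with both suprema in $Q$ is simply wrong. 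The paper's proof resolves this by reading $\Sup\dom(X)$ as the supremum in the poset $Q_\dom$ --- it says so explicitly, and the remark following the lemma (``the sups and infs in $Q_\dom$ need not coincide with those in $Q$'') confirms the intent. Under that reading nothing remains beyond your first observation: by Lemma~\ref{P:d-adj} the map $\dom\colon Q\to Q_\dom$ is left adjoint to $p\mapsto p\top$, and a left adjoint automatically sends $\Sup_Q X$ to the least upper bound of $\dom(X)$ inside $Q_\dom$, since $\dom(\Sup X)\le p\Leftrightarrow \Sup X\le p\top\Leftrightarrow(\forall x\in X.\ x\le p\top)\Leftrightarrow(\forall x\in X.\ \dom(x)\le p)$ for every $p\in Q_\dom$. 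No closure property of $Q_\dom$ under $Q$-suprema is needed --- nor, as the example shows, available.
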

\begin{proof}~
\begin{enumerate}
\item $\dom$ is a left adjoint by Lemma~\ref{P:d-adj} and therefore
  sup-preserving.  Sups over $X$ are taken in $Q$; those over
  $\dom(X)$ in $Q_\dom$.
\item
  $\left(\forall x\in X.\ \Inf X \le x\right) \Rightarrow \left(\forall x\in X.\ \dom(
  \Inf  X) \le  \dom( x)\right) \Leftrightarrow \dom(\Inf X) \le \Inf\dom(X)$.
\item Every $y\in Y\neq \emptyset$ satisfies
  \begin{equation*}
\dom\left(\Inf \dom(x) Y\right)\le \dom(\dom(x)y) = \dom(x)\dom(y)
  \le \dom(x)
\end{equation*}
  and therefore
  \begin{equation*}
\Inf \dom(x) Y = \dom\left(\Inf\dom(x) Y\right)\left( \Inf\dom(x) Y\right)\le
  \dom(x)\left( \Inf \dom(x) Y\right) \le \dom(x)\left( \Inf Y\right).
\end{equation*}
  The converse inequality holds because $x( \Inf Y)\le \Inf x Y$ in any
  quantale.\qedhere
\end{enumerate}
\end{proof}

If $Y=\emptyset$ in part (3) of the lemma, then
$\dom(x)( \Inf Y) = \dom(x) \top$ need not be equal to
\begin{equation*}
\top = \Inf \emptyset = \Inf \dom(x) Y.
\end{equation*}
In the quantale of binary relations over the set $\{a,b\}$, for
instance, $R=\{(a,a)\}$, satisfies $\dom(R)= R$ and
\begin{equation*}
\dom(R) \top = \left\{(a,a)\right\}\cdot \{(a,a),(a,b),(b,a),(b,b)\}
= \{(a,a),(a,b)\} \subset \top.
\end{equation*}

Moreover, part (1) of the lemma implies that the domain algebra
$Q_\dom$ is a \emph{complete} distributive lattice:
$\dom (\Sup\dom(X)) = \Sup \dom(X)$ holds for all $X\subseteq Q$, so
that any sup of domain elements is again a domain element. Yet the
sups and infs in $Q_\dom$ need not coincide with those in $Q$.

Second, the adjunction $\dom(x)\le p \Leftrightarrow x \le p\top$
holds for all $p\in Q_1$ in a boolean quantale $Q$. General properties
of adjunctions then imply that, for all $x\in Q$,
\begin{equation*}
\dom(x) = \Inf\{p\in Q_1\mid x\le p\top\}.
\end{equation*}
Lemma~\ref{P:fds-tdd-lem}, in turn, guarantees that this identity
defines predomain explicitly on boolean quantales. Yet
Example~\ref{ex:locality} rules out that it defines domain: the full
test dioid from this example is, in fact, a boolean quantale; it
satisfies \eqref{eq:lla} and thus \eqref{eq:d-adj}, but violates the
locality axiom of domain quantales.

Finally, we give two reasons why the relation-algebraic identity 
\begin{equation*}
\dom(x) = 1 \land x \top
\end{equation*}
cannot replace the domain axioms in boolean monoids and quantales.

It is too weak: In the boolean quantale $\{\bot,1,a,\top\}$ with $1$
and $a$ incomparable and multiplication defined by $\top\top=\top$ and
$aa=a\top =\top a = a$, it holds that $d(a) = \bot$ (when defined by
$\dom(x) = 1 \land x \top$), yet $\dom(a)a = \bot a=\bot < a$.
Therefore $\dom(x)x = x$ is not derivable from
$\dom(x) = 1 \land x\top$ even in boolean quantales.

It is too restrictive: although $\dom(x)= 1\land x\top$ obviously
holds in the quantale of binary relations, it fails, for instance, in
the quantale formed by the sets of (finite) paths over a digraph
$\sigma,\tau:E\to V$ mentioned in the introduction. Recall that the domain elements
of a set $P$ of paths are a subset of $V$ given by the sources of the
these paths. It is then obvious that $V\cap P\top=\emptyset$ unless
$P$ contains a path of length one and
$\dom(P) =\emptyset \Leftrightarrow P=\emptyset$, so that
$\dom(P) = V\cap P\top$ fails for any $P$ in which all paths have
length greater than $1$.

This type of argument applies to all powerset quantales in which the
composition of underlying objects (here: paths) is generally
length-increasing and the quantalic unit and domain elements are
formed by fixed-length objects.

\vspace{\baselineskip}

\noindent\textbf{Acknowledgments:} We would like to thank the
journal reviewers for their very insightful comments.


\bibliographystyle{alpha}
\bibliography{kad-united}

\end{document}